\documentclass[12pt]{article}

\usepackage[a4paper, total={6in, 9.5in}]{geometry}
\usepackage{cite}
\usepackage{float}
\usepackage{amsmath,amssymb,amsfonts}
\usepackage{amsthm}
\usepackage{algorithmic}
\usepackage{graphicx}
\usepackage{textcomp}
\usepackage{xcolor}
\def\BibTeX{{\rm B\kern-.05em{\sc i\kern-.025em b}\kern-.08em
    T\kern-.1667em\lower.7ex\hbox{E}\kern-.125emX}}


\usepackage{hyperref}
\graphicspath{{./figures/}}
\usepackage{subfig}
\usepackage{caption}
\usepackage{lipsum}
\usepackage[vlined,ruled]{algorithm2e}
\usepackage{tikz}
\usetikzlibrary{arrows.meta}
\usetikzlibrary{shapes.geometric,plotmarks,backgrounds,fit,calc,circuits.ee.IEC}
\usetikzlibrary{decorations.pathreplacing}
\usepackage{stmaryrd}

\newtheorem{theorem}{Theorem}
\newtheorem{corollary}{Corollary}[theorem]
\newtheorem{lemma}{Lemma}
\newtheorem{definition}{Definition}

\DeclareMathOperator{\im}{im}

\sloppy 
    
\begin{document}

\title{Layered Decoding of Quantum LDPC Codes }

\author{
Julien Du Crest\footnote{Univ. Grenoble Alpes, Grenoble INP, LIG, F-38000 Grenoble, France (julien.du-crest@univ-grenoble-alpes.fr)}, Francisco Garcia-Herrero\footnote{Department of Computer Architecture and Automatics (DACYA), Complutense University of Madrid, Spain (francg18@ucm.es) }, Mehdi Mhalla\footnote{ Univ. Grenoble Alpes, CNRS, Grenoble INP, LIG, F-38000 Grenoble, France (mehdi.mhalla@univ-grenoble-alpes.fr)}, \\ Valentin Savin \footnote{ Univ. Grenoble Alpes, CEA-L\'eti, F-38054 Grenoble, France (valentin.savin@cea.fr)} and Javier Valls \footnote{Instituto de Telecomunicaciones y Aplicaciones Multimedia, Universitat Politecnica de Valencia, 46022 Valencia, Spain (jvalls@upv.es)}\\
} 

\maketitle


\begin{abstract}
We address the problem of doing message passing based decoding of quantum LDPC codes under hardware latency limitations. We propose a novel way to do layered decoding that suits quantum constraints and outperforms flooded scheduling, the usual scheduling on parallel architectures. A generic construction is given to construct layers of hypergraph product codes. In the process, we introduce two new notions, \textit{$t$-covering layers} which is a generalization of the usual layer decomposition, and a modified scheduling called \textit{random order scheduling}. Numerical simulations show that the random ordering is of independent interest as it helps relieve the high error floor typical of message passing decoders on quantum codes for both layered and serial decoding without the need of post-processing. 
\end{abstract}

\section{Introduction}
A lot of work has been done in order to improve the decoding of quantum low-density parity-check (qLDPC) codes using message-passing (MP) decoders. 
Most of these works rely on the use of post-processing techniques \cite{panteleev2021degenerate,du2022stabilizer,Raveendran2021trappingsetsof}, whose feasibility is still to be demonstrated on actual hardware, due to the stringent latency, power and scalability requirements of the quantum system. A key attribute of MP decoding is the underlying {\em scheduling}, indicating the order in which variable and check node messages are updated. This has been subject to extensive research in the classical LDPC decoding literature, and it has been shown that the MP scheduling may significantly impact the convergence speed \cite{zhang2007iterative}, the decoding performance (\emph{e.g.}, in case of adaptive scheduling strategies \cite{mao2001decoding, savin2007iterative, vila2010ldpc}), or the performance (\emph{e.g.}, latency, area, power-consumption) of the hardware design \cite{boutillon2014hardware}.  The vast majority of hardware designs are based on partly-parallel architectures, implementing a layered decoding scheduling, which can be considered as a {\em de facto} standard solution, able to provide relevant complexity and performance advantages in most applications \cite{boutillon2014hardware}. 

For qLDPC codes, the MP decoding performance may depend even more on the underlying scheduling, which can be most likely attributed to the code degeneracy~\cite{du2022stabilizer}. Moreover, some post-processing techniques may be highly dependent on the MP decoding scheduling. For instance, the order statistics decoding post-processing has been shown to provide very good performance when a layered scheduling is used, but its performance may be drastically degraded using a flooded (\emph{i.e.}, fully parallel) scheduling~\cite{du2022stabilizer}.

To design an {\em efficient} partly parallel architecture implementing a layered scheduling, one needs a layer decomposition of the parity check matrix. For qLDPC codes this may be tricky, as they do not have an innate decomposition into horizontal layers (as for instance in the case of classical quasi-cyclic LDPC codes).  To ensure a high degree of parallelism, it is also desirable to have a decomposition into a minimal number of layers. In this paper, we first give a generic construction of a minimal layer decomposition for hypergraph product codes. Moreover, in an attempt to start bridging the gap between hardware limitations and state of the art MP decoders, we propose two new tools to implement layered decoding of qLDPC codes. The first is a generalization of the notion of layer decomposition, consisting of a family of \textbf{$t$-covering layers}, which can be seen as a layer decomposition of $t$ decoding iterations, and is aimed at increasing the parallelism degree of the layered architecture. The second is a new scheduling called \textbf{random order scheduling}, and is shown to significantly improve the  decoding performance. Our numerical simulations provide evidence that both could be used in the future to meet hardware needs as they offer a good compromise of speed and performance.

\section{Preliminaries}
\subsection{Quantum Codes}
Calderbank-Shor-Steane (CSS) codes are defined by two classical $(m_X/m_Z,n)$-parity check matrices $H_X,H_Z$ satisfying $H_X H_Z^\perp = 0$. The dimension of the quantum code is $n-\text{rank}(H_X)-\text{rank}(H_Z)$ and its minimum distance $d = \{\min |v|, v \in \ker H_X \backslash \im H_Z^t \cup \ker H_Z \backslash \im H_X^t\}$. 
One such class of CSS codes are the hypergraph product codes (HPC), which given two classical codes $A$ and $B$ give the quantum code $H_X=[A\otimes I, I\otimes B^t]$ and $H_Z=[I\otimes B, A^t \otimes I]$ (see \cite{tillich2013quantum} for the construction and parameters of the code).

In the following, we will only focus on decoding $Z$ errors using $H_X$. All proofs are easily adapted to correcting $X$ errors. 
\subsection{MP decoding}
MP decoders work by exchanging soft information between check and variable nodes on the Tanner graph representation of the code, trying to converge to a hard decision on the variable nodes that satisfy the syndrome. One crucial factor is to decide in which order messages are exchanged and soft information updated. There are three main decoding scheduling used classically: flooded, in which messages are exchanged simultaneously and soft information updated in parallel, serial, where the graph is updated sequentially going through all the checks one by one, and layered, which lies in between, taking advantage of checks that have a disjoint support to update them in parallel, essentially doing a speed up of serial scheduling at no cost. For more details on classical message passing, refer to \cite{savin2014ldpc}.

\subsection{Layered Scheduling}
To avoid memory conflicts in a partly parallel architecture, implementing a layered scheduling, 
the same memory slot should not be read/written to by two different processing units at the same time. This motivates the following definitions. 
\begin{definition}
A \textbf{layer} is a collection of check-nodes such that any two check-nodes have no neighbouring variable-node in common.
\end{definition}
\begin{definition}
A \textbf{layer decomposition} $L_0\sqcup \dots \sqcup L_{k-1}$ is a partition of the set of check-nodes into $k$ layers.
\end{definition}
\begin{definition}
A decomposition is said to be \textbf{minimal} if it is impossible to find a decomposition in less layers.
\end{definition}
A simple density argument is enough to state the following fundamental inequality:
\begin{lemma}
Any $k$ layers decomposition of a $(\_,\delta)$-regular\footnote{A matrix is said to be $(c,d)$-regular if every column is of weight $c$ and every row of weight $d$.} $(m,n)$-parity-check matrix satisfies $k \geq \frac{\delta m}{n}$.
\end{lemma}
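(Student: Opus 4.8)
The plan is to run the ``density argument'' alluded to in the statement, namely a double counting of the total number of edges in the Tanner graph, organized layer by layer. First I would fix a $k$-layer decomposition $L_0 \sqcup \dots \sqcup L_{k-1}$ of the set of check-nodes, and recall the defining property of a layer: if $r, r'$ are two distinct check-nodes in the same layer $L_j$, then $\supp(r) \cap \supp(r') = \emptyset$, where $\supp(r) \subseteq \{1,\dots,n\}$ denotes the set of variable-nodes neighbouring $r$ (equivalently, the support of the corresponding row).

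Next I would bound the size of a single layer. Since the matrix is $(\_,\delta)$-regular, every row has weight exactly $\delta$, so $|\supp(r)| = \delta$ for every check-node $r$. Within a layer $L_j$ the sets $\{\supp(r)\}_{r \in L_j}$ are pairwise disjoint subsets of the $n$-element set of variable-nodes, hence
\[
\delta \, |L_j| \;=\; \sum_{r \in L_j} |\supp(r)| \;=\; \left| \bigsqcup_{r \in L_j} \supp(r) \right| \;\leq\; n,
\]
which gives $|L_j| \leq n/\delta$ for each $j \in \{0,\dots,k-1\}$.

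Finally I would sum this over all layers. Because the layers partition the $m$ check-nodes, $\sum_{j=0}^{k-1} |L_j| = m$, so $m \leq k \cdot n/\delta$, i.e. $k \geq \delta m / n$, as claimed. There is no real obstacle here: the argument is a one-line counting bound, and the only thing to be careful about is invoking regularity only on the rows (the column weight, written ``$\_$'', plays no role) and making sure the disjointness in the definition of a layer is used at the level of supports. If one wanted the slightly sharper integer bound one could instead write $|L_j| \leq \lfloor n/\delta \rfloor$ and conclude $k \geq m / \lfloor n/\delta \rfloor$, but the stated inequality follows already from the cruder estimate.
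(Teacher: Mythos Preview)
Your argument is correct and is precisely the ``simple density argument'' the paper alludes to; the paper states the lemma without spelling out the proof, and what you wrote is the natural way to fill it in.
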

\begin{definition}
A decomposition is \textbf{$\gamma$-balanced} if  
$$ \frac{|L_i|}{|L_j|} \leq \gamma,\quad \forall L_i,L_j $$ 
A decomposition will be said to be \textbf{balanced} if it is 1-balanced. Balanced decompositions ensure an efficient use of hardware resources (check-node processing units).
\end{definition}

\section{Hardware Requirements}\label{secHarReq}
In contrast to classical LDPC decoders, which prioritize optimizing throughput, qLDPC ones must satisfy highly constrained values of latency to avoid the backlog problem \cite{holmes2020nisqp}, which would lead to an exponential slowdown of the quantum processor making the QEC implementation impractical.

\begin{table*}[ht!]
    \centering
    \caption{Latency approximation for the different architectures}
    \label{tab:latency}
    \begin{tabular}{c|c|c}
        Parallel & Serial & Layered \\
        \hline
        $T_{min}^{(P)} \times 2 \times it_{max}$ & $T_{min}^{(S)} \times (it_{max} /2) \times m$  &      $T_{min}^{(L)} \times 2 \times (it_{max} /2) \times  k $  \\
 
    \end{tabular}
\end{table*}

To illustrate the behavior of the decoder, the B1 code from~\cite{panteleev2021degenerate} is taken as an example (defined in Section \ref{secGenCon}). An MP decoder for the B1 code can achieve with this architecture\footnote{All figures reported here come from our implementation of an either fully parallel \cite{Valls2021syndrome} or serial min-sum decoder architecture, with exchanged messages quantized on 6 bits, on a Xilinx FPGA xcv095 board.}, a clock period between 8 and 10ns which derives a latency between 480ns and 600ns at 30 iterations, that is close to the most constrained technology. 
Taking this into account, the clock period usually can be reduced to 70\% and 80\% of the clock period obtained with the parallel version. The question is that with this schedule and the derived architecture only one check node is updated in a clock cycle, because of the sequential update of the messages. Due to this, at least $m$ clock-cycles are required \footnote{Assuming that due to the reduced complexity of the units both the check node and the connected variable nodes can be updated in parallel.} to complete just one iteration of the MP algorithm. 
Following the example of code B1, the clock period should be between 5.6 and 7ns, but the total latency at 30 iterations would be between 74.26$\mu$s and 92.82$\mu$s. Even assuming a reduction of the number of iterations to get similar performance to the flooded schedule, the range of latencies would be out of the time budget of supercomputing qubits and transmons. To meet the timing requirements, the clock period should be equal to  $\frac{5.6}{m/2}$=0.025ns, which is a maximum clock frequency of 40GHz. This frequency cannot be achieved by any FPGA or ASIC, and on the other hand, it would require a large power consumption that will cause another problem with the power budget and the refrigeration system \cite{Ueno_2021QECOOL}.
With the previous examples, it is easy to conclude that the implementation of serial scheduling, even if it has a better performance than the flooded one, it is not a realistic solution when it comes to implementation.
A trade-off solution between both serial and flooded may be the layered schedule. If the number of layers is small enough, the number of clock cycles per iteration will not grow too much and the number of iterations will be usually reduced by two. Going back to the B1 code example, assuming that in the worst case, the clock period will be similar to the parallel architecture, and with a distribution of 3.5 layers\footnote{See Section \ref{secGenCon} for the formal definition of fractional layer number.} the total latency for 30 iterations will be between 8x3.5x2x30=1.68$\mu$s and 10x3.5x2x30=2.10$\mu$s; and between 8x3.5x2x15=840ns and 10x3.5x2x15=1.05$\mu$s with 15 iterations, which is fairly close to the constraints of superconducting qubits and meets the requirements of other technologies. 

As we will see in the following sections, the layered schedule will also benefit from some non-negligible performance improvements, apart from the reduction in the number of iterations, compared to the flooded schedule.

In table \ref{tab:latency}, we can find a summary of the total latency for different architectures, where $T_{min}^{(P)}$, $T_{min}^{(S)}$ and $T_{min}^{(L)}$ are the minimum clock periods achievable by the parallel, serial and layered architectures respectively, and $it_{max}$ is the maximum number of iterations configured in the parallel decoder. Note that we assume that the number of iterations of serial and layered is usually half the number of iterations of the parallel architecture~\cite{zhang2007iterative}.

\section{Generic Constructions}\label{secGenCon}

\subsection{Layered Construction for Hypergraph Product Codes}
Consider an hypergraph product code defined by two matrices $A$ and $B$, such that $H_X=[A\otimes I, I\otimes B^t]$ and $H_Z=[I\otimes B, A^t \otimes I]$ \cite{tillich2013quantum}. There exist a layer construction from a layer decomposition of $A,B, A^t, B^t$.
\begin{theorem}
Given minimal decompositions $A=A_0\sqcup\dots \sqcup A_{k_A-1},B = B^t_0\sqcup \dots \sqcup B^t_{k_{B^t}-1}$, one can construct a minimal decomposition of $H_X$ in $k=\max(k_A,k_{B^t})$ layers.

\end{theorem}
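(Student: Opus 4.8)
The plan is to show how a layer of $H_X = [A \otimes I,\ I \otimes B^t]$ is built from simultaneously choosing one layer of $A$ and one layer of $B^t$, and then argue both that this yields a valid layer decomposition into $\max(k_A, k_{B^t})$ layers and that one cannot do better. First I would recall that the check-nodes of $H_X$ are indexed by pairs, and that a check indexed by $(r,c)$ has support determined on the left block by the row $r$ of $A$ (spread over the copies of $I$ indexed by $c$) and on the right block by the column... more precisely by the row $r$ of $B^t$. The key combinatorial observation is that two checks $(r_1,c_1)$ and $(r_2,c_2)$ of $H_X$ share a variable-node iff either they collide in the left block — which happens exactly when rows $r_1,r_2$ of $A$ overlap \emph{and} $c_1 = c_2$ — or they collide in the right block — when $c_1 = c_2$... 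I need to be careful with the tensor bookkeeping here; the cleanest way is: checks of $H_X$ live on the vertex set of shape (rows of $A$) $\times$ (rows of $B^t$)-ish, and a collision forces agreement in one coordinate together with a row-overlap in $A$ (resp. in $B^t$) in the other. So if I take $L^A_i$ a layer of $A$ and $L^{B^t}_j$ a layer of $B^t$, the set of $H_X$-checks whose ``$A$-coordinate'' lies in $L^A_i$ and whose ``$B^t$-coordinate'' lies in $L^{B^t}_j$ is a layer of $H_X$: any two such checks either agree in both coordinates (same check) or, if they differ in the $A$-coordinate, those two rows of $A$ are in a common layer hence disjoint, killing a left-block collision, and a right-block collision would need the $A$-coordinates equal — contradiction; symmetrically for the other case.

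Next I would assemble the full decomposition. Set $k = \max(k_A, k_{B^t})$, pad the shorter decomposition with empty layers so both have exactly $k$ parts, and define $L_\ell = \{\text{checks of } H_X \text{ with } A\text{-coordinate in } L^A_\ell \text{ and } B^t\text{-coordinate in } L^{B^t}_\ell\}$ for $\ell = 0,\dots,k-1$. These are disjoint (the $A$-coordinate alone already separates them, since the $L^A_\ell$ partition the rows of $A$) and their union is all of the check-nodes of $H_X$ (every check has some $A$-coordinate, landing in exactly one $L^A_\ell$, and its $B^t$-coordinate is then in $L^{B^t}_\ell$ by construction of the pairing). By the previous paragraph each $L_\ell$ is a layer, so this is a layer decomposition into $k$ layers.

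Finally, minimality. Here I would use the fundamental density inequality (the Lemma on $(\_,\delta)$-regular matrices), or rather its logic: a layer decomposition of $H_X$ restricted to the checks coming from the $A\otimes I$ block must in particular be a layer decomposition of that block, and collisions within $A \otimes I$ project down to collisions within $A$, so any layer decomposition of $H_X$ induces one of $A$ (and likewise of $B^t$) — whence $k \geq k_A$ and $k \geq k_{B^t}$, i.e. $k \geq \max(k_A,k_{B^t})$, matching our construction. The point needing care is that a layer of $H_X$ need not restrict to a \emph{single} layer's worth of structure cleanly unless we track that two checks with the same $A$-coordinate but different $B^t$-coordinate always collide (they share the whole left block row) — this is what forces the $A$-coordinates appearing in one $H_X$-layer to be pairwise non-overlapping in $A$, giving the induced decomposition. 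I expect this converse direction — rigorously extracting an induced minimal decomposition of $A$ and of $B^t$ from an arbitrary minimal decomposition of $H_X$, handling the tensor indexing without hand-waving — to be the main obstacle; the forward construction is essentially bookkeeping once the collision criterion is pinned down.
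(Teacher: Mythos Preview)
Your construction does not partition the checks of $H_X$. The checks are indexed by \emph{all} pairs $(a,b)$ with $a$ a row of $A$ and $b$ a row of $B^t$; there is no ``pairing'' tying the $B^t$-coordinate to the $A$-coordinate. So a check with $a\in A_2$ and $b\in B^t_5$ lies in none of your sets $L_\ell=\{(a,b): a\in A_\ell,\ b\in B^t_\ell\}$. What you have written down is only the diagonal of the $k\times k$ grid of products $A_i\times B^t_j$, and it misses every off-diagonal cell. The paper fixes exactly this by a cyclic-shift construction: $L_i$ gathers, for every $j$, the checks with $a\in A_j$ and $b\in B^t_{j+i\bmod k}$. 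Then every pair $(a,b)$ with $a\in A_j$, $b\in B^t_{j'}$ lands in the unique layer $L_{j'-j\bmod k}$, and one still has to verify (the paper's Case~C) that two checks $(a,b),(a',b')$ with $a\in A_j$, $a'\in A_{j'}$, $j\neq j'$, do not collide even though $a,a'$ may overlap in $A$ --- this works because then $b\neq b'$, so $a\otimes e_b$ and $a'\otimes e_{b'}$ are disjoint on the left block regardless of $a,a'$.

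Your minimality sketch also rests on a false claim: two checks $(a,b)$ and $(a,b')$ with the same $A$-coordinate do \emph{not} ``share the whole left block row''; their left parts $a\otimes e_b$ and $a\otimes e_{b'}$ are disjoint whenever $b\neq b'$. In particular it is not true that the $A$-coordinates occurring in a single $H_X$-layer must be pairwise non-overlapping in $A$ (Case~C above is a counterexample). The paper's argument is instead to restrict to a slice: fix any $b$ and look at the checks $\{(a,b):a\in\text{rows}(A)\}$; two of these collide in $H_X$ iff the corresponding rows of $A$ overlap, so any $H_X$-decomposition restricted to this slice gives a decomposition of $A$, forcing $k\ge k_A$. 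Symmetrically fixing $a$ gives $k\ge k_{B^t}$.
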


A similar theorem can be stated for $A^t, B$ and $H_Z$ with the same proof techniques.
\paragraph{Construction}
If $k_A \neq k_B$, without loss of generality, suppose $k_A < k_B$.  The first step is to add empty layers to A so that $k'_A = k'_B = k$. That is let $A=A_0\sqcup \dots \sqcup A_{k_A-1} \sqcup A_{k_A} \dots \sqcup A_k$ where $A_{k_A} = \dots = A_k = \emptyset$. 
Let's label each row of $H_X$ as  
$$a\varoast b := [a\otimes e_b, e_a \otimes b], \quad
a \in \text{rows}(A), b \in \text{rows}(B^t) $$
In the following we will denote by \textit{left} the sub-matrix $[A\otimes I]$ and \textit{right} the sub-matrix $[I\otimes B^t]$.
Create layers $L_0\dots L_{k-1}$ such that 
\begin{equation}
(a\varoast b) \in L_i \Leftrightarrow \exists j \quad a \in A_j, b \in B^t_{j+i \mod k}
\end{equation}
%

By definition, all checks belong to some layer, we now have to check that any two checks in a given layer have disjoint variable nodes support.
Suppose that two checks $a\varoast b, a' \varoast b'$ belong to $L_i$. 
 Case A : $a=a'$. They do not touch on the left thanks to the tensor product with the identity. Furthermore, it means that $b\neq b'$ but then both belong to $B^t_{l}$ for some $l$, so they have disjoint support on the right.
Case B,C : $a\neq a'$. They have disjoint support on the right because of the tensor product with the identity. To show that they do not intersect on the left, there are two cases :
If $a$ and $a'$ belong to some $A_l$ (case B), then by definition they have disjoint support on the left. 
If $a$ and $a'$ belong respectively to $A_l, A_{l'}$ with $l\neq l'$ (case C), then it means that $b \in B^t_{l+i \mod k}, b'\in B^t_{l'+i \mod k}$, two distinct classes. Hence even though $a$ and $a'$ might share variable nodes in $A$, they do not intersect in the tensored version $A\otimes I$.
Fig. \ref{fig:smallexpl} depicts a simple example of the 3 cases.\\
\setcounter{MaxMatrixCols}{15}
\begin{figure*}[!ht]
\centering
\begin{tabular}{|ccc|ccc|cc|cc|cc|}
\hline
\underline{1} &   &   & \underline{1} &   &   & \underline{1} & \underline{1} &   &   &   &  \\
  & \fbox{1} &   &   & \fbox{1} &   & \fbox{1} &   &   &   &   &  \\
  &   & \fbox{1} &   &   & \fbox{1} &   & \fbox{1} &   &   &   &  \\
\hline
\raisebox{.5pt}{\textcircled{\raisebox{-.9pt} {1}}} &   &   &   &   &   &   &   & \raisebox{.5pt}{\textcircled{\raisebox{-.9pt} {1}}} & \raisebox{.5pt}{\textcircled{\raisebox{-.9pt} {1}}} &   &  \\
  & \underline{1} &   &   &   &   &   &   & \underline{1} &   &   &   \\
  &   & 1 &   &   &   &   &   &   & 1 &   &   \\
\hline
  &   &   & \raisebox{.5pt}{\textcircled{\raisebox{-.9pt} {1}}} &   &   &   &   &   &   & \raisebox{.5pt}{\textcircled{\raisebox{-.9pt} {1}}} & \raisebox{.5pt}{\textcircled{\raisebox{-.9pt} {1}}} \\
  &   &   &   & 1 &   &   &   &   &   & 1 &   \\
  &   &   &   &   & 1 &   &   &   &   &  & 1  \\
\hline
\end{tabular}
    \hspace{2em}
\begin{tabular}{l}
Case \fbox{A} : $a = a' \implies b \neq b', \exists l,\quad b,b' \in B^t_l$ \\
\\
Case \raisebox{.5pt}{\textcircled{\raisebox{-.9pt} {B}}} : $a \neq a',\quad a,a' \in A_l$ \\
\\
Case \underline{C} : $a \neq a', a \in A_l, a'\in A_{l'}$ \\  
\end{tabular}
 
\caption{Small visualization example of proof cases where $A=B^t$}
\label{fig:smallexpl}
\end{figure*}

\paragraph{Minimality} The proof is by contradiction. Assume that there is a decomposition in less than $k_{B^t}$ layers, then one could recover a decomposition for $B^t$ in less than $k_{B^t}$ from a restriction to the $\{a \varoast b,\quad \forall b\}$ positions for any given $a$. Any decomposition in less than $k_A$ layers would similarly give a decomposition for $A$ from the restriction of $H_X$ to any $\{a\varoast b, \quad \forall a\}$ for a given $b$. Hence the decomposition in $\max(k_A,k_B^t)$ is minimal for $H_X$.

Note that the construction is not unique, for example, Equation 1 can be replaced by the following equation where $\sigma$ is any $k$-permutation, although this is still not the most generic formula:
\begin{equation}
(a\varoast b) \in L_i \Leftrightarrow \exists j,\quad a \in A_{\sigma(j)}, b \in B^t_{j+i \mod k}
\end{equation}

\begin{theorem}
Given $k$-layerings for $A$ and $B^t$, respectively $\alpha$ and $\beta$-balanced. Then $H_X$ is $\gamma$-balanced with :\\
\begin{tabular}{ccl}
   (i) & $\gamma < min(\alpha,\beta)$ & if $\alpha,\beta > 1$   \\
    (ii) &  $ \gamma = 1 \qquad$ & otherwise. \\
\end{tabular}

\end{theorem}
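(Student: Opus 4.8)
The plan is to compute the sizes of the layers $L_0, \dots, L_{k-1}$ of $H_X$ produced by the construction in Equation~1 (or Equation~2; the bound should be insensitive to the permutation used), in terms of the layer sizes $|A_0|, \dots, |A_{k-1}|$ and $|B^t_0|, \dots, |B^t_{k-1}|$, and then bound the ratio $|L_i|/|L_j|$. From the defining equivalence $(a\varoast b)\in L_i \iff \exists j,\ a\in A_j,\ b\in B^t_{j+i \bmod k}$, and the fact that the classes $A_j$ partition $\mathrm{rows}(A)$ and the classes $B^t_\ell$ partition $\mathrm{rows}(B^t)$, each row $a\varoast b$ lands in exactly one layer, and one gets the clean formula
\begin{equation}
|L_i| = \sum_{j=0}^{k-1} |A_j|\cdot |B^t_{j+i \bmod k}|.
\end{equation}
So each $|L_i|$ is a ``cyclic correlation'' of the sequence $(|A_j|)_j$ with the sequence $(|B^t_\ell|)_\ell$. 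Write $\alpha$-balanced as $a_{\min}\le |A_j|\le a_{\max}$ with $a_{\max}/a_{\min}\le\alpha$, and similarly $b_{\min}\le|B^t_\ell|\le b_{\max}$ with $b_{\max}/b_{\min}\le\beta$.

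**Deriving the bound.**
For part (ii): if $\alpha=1$, then all $|A_j|$ are equal to a common value $a$, so $|L_i| = a\sum_\ell |B^t_\ell| = a\,|\mathrm{rows}(B^t)|$, which is independent of $i$; hence $H_X$ is $1$-balanced. Symmetrically if $\beta=1$. This disposes of case (ii). (One should double-check the edge case where a minimal layering is forced to contain empty layers after the padding step of the construction — but the balance hypotheses $\alpha,\beta$ are stated for the $k$-layerings actually used, so empty layers would force $\alpha$ or $\beta$ to be infinite and we are in case (i) vacuously, or the padding is only invoked when $k_A\ne k_{B^t}$ and then one of the balance constants is already $\infty$; I would state the theorem as applying to the given balanced $k$-layerings and note this.)

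For part (i), with $\alpha,\beta>1$: the crude bound is $|L_i|\le \sum_j a_{\max} b_{\max} = k\,a_{\max}b_{\max}$ and $|L_j|\ge k\,a_{\min}b_{\min}$, giving $\gamma\le \alpha\beta$, which is not what we want. The key observation that gives the \emph{strict} improvement $\gamma<\min(\alpha,\beta)$ is that $|L_i|$ is a \emph{sum of $k$ products}, and the same multiset of $A$-sizes appears in every layer — only the pairing with $B^t$-sizes is permuted. So to maximize $|L_i|$ one cannot simultaneously pair \emph{every} large $|A_j|$ with a large $|B^t_\ell|$ unless the sequences are "aligned"; and even in the worst alignment, $|L_i| = \sum_j |A_j|\,|B^t_{\pi(j)}|$ for a fixed permutation $\pi$, so $|L_i|/|L_j|$ is a ratio of two such sums over the \emph{same} $A$-weights. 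Concretely, $|L_i| \le \sum_j |A_j|\, b_{\max} = b_{\max}\,S_A$ where $S_A := \sum_j|A_j|$, and $|L_j|\ge \sum_j |A_j|\,b_{\min} = b_{\min}\,S_A$, so $|L_i|/|L_j| \le b_{\max}/b_{\min}\le \beta$; symmetrically $\le\alpha$. Hence $\gamma\le\min(\alpha,\beta)$. Strictness then follows because equality $|L_i|=b_{\max}S_A$ requires \emph{every} $|B^t_{j+i}| = b_{\max}$, i.e.\ $\beta=1$, contradicting $\beta>1$ (and similarly the lower bound is not tight), so both inequalities are strict when $\alpha,\beta>1$.

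**Main obstacle.**
The routine part is the size formula and the $\le\min(\alpha,\beta)$ estimate. The delicate part is pinning down \emph{strictness} cleanly: I would argue that $|L_i|=b_{\max}S_A$ forces all $B^t$-layers to have size $b_{\max}$ (hence $b_{\min}=b_{\max}$, i.e.\ $\beta=1$), and likewise $|L_j|=b_{\min}S_A$ forces all of them to have size $b_{\min}$; since these cannot both hold when $\beta>1$, at least one of the two bounding inequalities in $|L_i|/|L_j|\le b_{\max}/b_{\min}$ is strict, and then combining with the symmetric argument in $\alpha$ gives $\gamma<\min(\alpha,\beta)$. I would also sanity-check that the construction does use a genuine permutation pairing (Equation~2 with arbitrary $\sigma$) so the argument is not accidentally specific to the cyclic shift; since the bound only used that each layer is $\sum_j|A_j||B^t_{\tau(j)}|$ for some bijection $\tau$ depending on $i$, it goes through verbatim.
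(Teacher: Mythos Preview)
Your proof is correct and follows essentially the same approach as the paper: both compute $|L_i|=\sum_j |A_j|\,|B^t_{j+i\bmod k}|$ and then sandwich every layer size between $b_{\min}S_A$ and $b_{\max}S_A$ (with the symmetric $\alpha$-bound), obtaining $\gamma\le\min(\alpha,\beta)$ and strictness from the fact that equality would force all $b_\ell$ equal. The only cosmetic difference is that the paper first invokes the rearrangement inequality to identify the largest layer explicitly before bounding the ratio, a step your direct max/min argument renders unnecessary.
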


\begin{proof}
(i)
Let $a_0...a_{k-1}$ be the sizes of layers $A_0...A_{k-1}$, and $b_0...b_{k-1}$ the sizes of the layers $B^t_0...B^t_{k-1}$. Then each layer $L_l$ of $H_X$ has size $\sum a_i b_{i+l}$. We also suppose layers of $A$ and $B^t$ are ordered from biggest to smallest hence $a_0=\alpha a_{k-1}$ and $b_0=\beta b_{k-1}$.\\
The layer $L_0$ of size $\sum a_i b_i$ is the biggest layer, a classical proof of that is by contradiction, using the fact that $\forall a \geq c, b\geq d,\quad  ab+cd \geq ad+bc $.
The ratio between any other layer $L_j$ and $L_0$ is smaller than $\beta$ since $\beta\sum a_i b_{i+j} > \sum a_i b_0 >  \sum a_i b_i$ using the fact that $\forall b_j,\quad  \beta b_j \geq b_0$ ( and similarly for $\alpha$ ).\\

(ii)
Suppose $A$ is perfectly balanced. In that case, for any $b \in B^t$, it will appear the same 
number of times in each layer $L_i$ since the $a \varoast b, \forall a \in A$ will be equally balanced in the layers. Hence the code will be balanced. The same argument holds if $B^t$ is perfectly balanced.

\end{proof}

\begin{corollary}
Given a $k_A$-layering or $A$, and a $k_B> k_A$ layering for $B^t$ $\beta$-balanced. Then $H_X$ is $\gamma$-balanced with :\\
$$\gamma \leq \beta $$
\end{corollary}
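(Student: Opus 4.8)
The plan is to read off the per-layer sizes from the explicit $H_X$-layering and then bound them crudely. Recall from the proof of Theorem~2 that, after padding the $k_A$-layering of $A$ with $k_B - k_A$ empty layers to obtain a $k$-layering with $k = k_B$, the $l$-th layer of $H_X$ has size $|L_l| = \sum_{i=0}^{k-1} a_i\, b_{(i+l)\bmod k}$, where $a_i = |A_i|$ and $b_i = |B^t_i|$. The point of the corollary is that this quantity stays $\beta$-balanced even though the padded $A$-layering is maximally unbalanced: it contains an empty layer, so it has no finite balance parameter and Theorem~2 does not directly apply.

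First I would isolate two elementary bounds on $|L_l|$. Set $M = \sum_{i=0}^{k-1} a_i$ (the number of rows of $A$), $b_{\max} = \max_i b_i$, and $b_{\min} = \min_i b_i$. Since $\beta$-balancedness of the $B^t$-layering forbids an empty layer of $B^t$ (an empty layer would make the balance ratio infinite), we have $b_{\min} > 0$. Using $a_i \ge 0$ together with $b_{\min} \le b_{(i+l)\bmod k} \le b_{\max}$ for every index $i$ and every shift $l$, summing termwise gives $M\, b_{\min} \le |L_l| \le M\, b_{\max}$ for all $l$.

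Then, for any two layers $L_l, L_{l'}$ of $H_X$, I would divide the upper bound for $|L_l|$ by the (strictly positive) lower bound for $|L_{l'}|$ to obtain $\frac{|L_l|}{|L_{l'}|} \le \frac{M\, b_{\max}}{M\, b_{\min}} = \frac{b_{\max}}{b_{\min}} \le \beta$, where the last inequality is exactly the definition of $\beta$-balancedness of the $B^t$-layering. Hence $H_X$ is $\beta$-balanced, i.e. $\gamma \le \beta$.

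I do not anticipate a genuine obstacle: the argument is essentially a one-line density estimate, and the only care needed is to route around the empty layers of $A$, which is why the bounds are phrased through the aggregate $M$ and the $b_i$'s only and never divide by an individual $a_i$ or by $|A_j|$. It is worth noting in passing that the inequality cannot be tightened to a strict one: taking $k_A = 1$ makes the $H_X$-layer sizes exactly $M b_0, \dots, M b_{k-1}$, so $\gamma$ equals the balance parameter of the $B^t$-layering, which can be as large as $\beta$.
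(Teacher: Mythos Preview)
Your proof is correct and follows essentially the same route as the paper. The paper's argument is a one-line reduction to Theorem~2: pad $A$ with empty layers to get a $k_B$-layering, note that this padded layering is ``$\infty$-balanced'', and then read off $\gamma \le \min(\infty,\beta)=\beta$ from the $\beta$-half of the Theorem~2(i) estimate; you unfold this same estimate explicitly via the sandwich $M\,b_{\min}\le |L_l|\le M\,b_{\max}$, which is exactly the inequality driving the $\beta$-bound in that proof.
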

\begin{proof}
    Same as above, considering a $k_B$ layering for $A$ by adding empty layers. This new layering is $\infty$-balanced.
\end{proof}
\subsection{Random Ordering}

We introduce a decoding technique called random ordering. This technique consists of applying a random order on the layers' application at each decoding step. This is also generalized to serial decoding by considering that each check belongs to its own layer (i.e. $k=m$). This seemingly anodyne step helps to alleviate the error floor quite dramatically. In addition, further simulations showed us that one does not even have to use a ``good" pseudo-random generator to generate the permutation, and this can be done with virtually no cost using a simple congruent generator, a solution that is hardware friendly.\\





\subsection{$t$-Covering of Layers}
For many codes, the theoretical bound on $k$ given by a density argument is not tight. However, since for the quantum codes the number of layers is fixed due to latency constraints, it is important to stay as close as possible to the theoretical bound. We introduce a generalization of the layer decomposition called a \textbf{$t$-covering of ($k$) layers}. We drop the requirement that the layers should be disjoint, and only require that their union taken with multiplicities should cover each check exactly $t$ times. 
In the following, the parameters of a $t$-covers will be specified as $(t,k,\gamma)$, giving the cover parameters and the balance of the layers.
Note that when using $t$-covers, the usual term of ``iteration" becomes ambiguous because it might be the case that the decoder stops while all the checks have not been seen the same number of times. Since by pipelining the process, the syndrome satisfaction could be checked after each layer application adding very low latency, in the following we will often refer to the number of iterated layers (but always specify it when we do so).
To quickly compare a $t$-cover with another or with a layer decomposition, it is useful to introduce the  \textbf{fractional layer number} as $\frac{k}{t}$, intuitively it captures the ``average" number of layers the decoder has to process to see each check once. Finally, by concatenating $t$ times the matrix $H$, it is clear that the density bound of lemma 1 applies to the fractional layer number. As a simple application, for the code B1 given below, we found a (2,7,1)-cover, $\frac{k}{t}=3.5$ . We could also find a (1,4,2) layer decomposition, $\frac{k}{t}=4$, and the density bound gives us $\frac{k}{t} \geq 3$, since no decomposition in $3$ layers is known for the B1 code, our $2$-cover sets a new standard in decoding efficiency.

\section{Applications on Particular Quantum Codes}
\subsection{C2 Code}
The C2 code is a hypergraph product code generated from a single cyclic matrix ($A=B$) of generator polynomial $p(x) = 1 + x^2 + x^5$ and length $l=31$.  
Since this cyclic matrix (and its transpose) accepts a decomposition in $5$-layers, using the technique from theorem 1, we can construct a 5 layer decomposition for the C2 code.
As said earlier about the balancing effect of the procedure, the decomposition used for $A,B,A^t,B^t$ is $(1,5,2)$-cover and it yields a $(1,5,1.1)$-cover for C2. This shows the balancing effect of the procedure, as we go from $\gamma=2$ to $\gamma=1.1$. Here are the layers used for the quasicyclic matrix:
\pagebreak
\begin{table}[!ht]
\centering
\begin{tabular}{c|l}
    $A_0$   & 0 1 7 8 14 15 21 22 \\
    $A_1$   & 2 3 9 10 16 17 23 24\\
    $A_2$   & 4 11 18 25 29\\
    $A_3$   & 5 12 19 26 30\\
    $A_4$   & 6 13 20 27 28\\
\end{tabular}
\end{table}

It should be noted that in order to improve the latency (at the cost of a more complex construction), we were also able to create a $(224,961,1)$-cover of C2, achieving a fractional layer number of 4.29 and giving close numerical results. 

\subsection{B1 Code}

\begin{figure*}[ht!]
\centering
\subfloat[{$B1[[882,24]]$ SP}]{\includegraphics[width=0.5\linewidth]{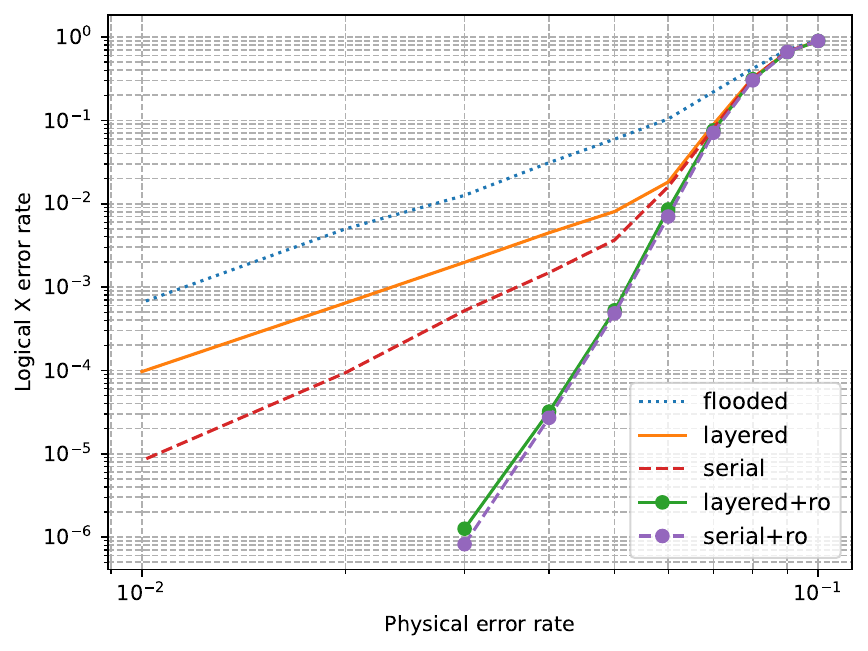}}\hfill%
\subfloat[{$C2[[1922,50,16]]$ SP}]{\includegraphics[width=.5\linewidth]{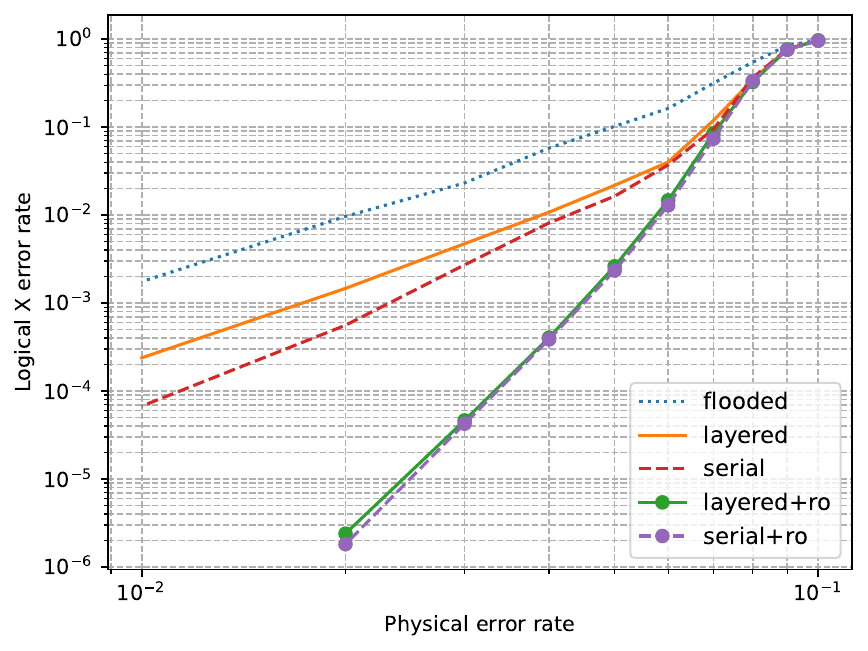}}\\
\subfloat[{$B1[[882,24]]$ NMS}]{\includegraphics[width=0.5\linewidth]{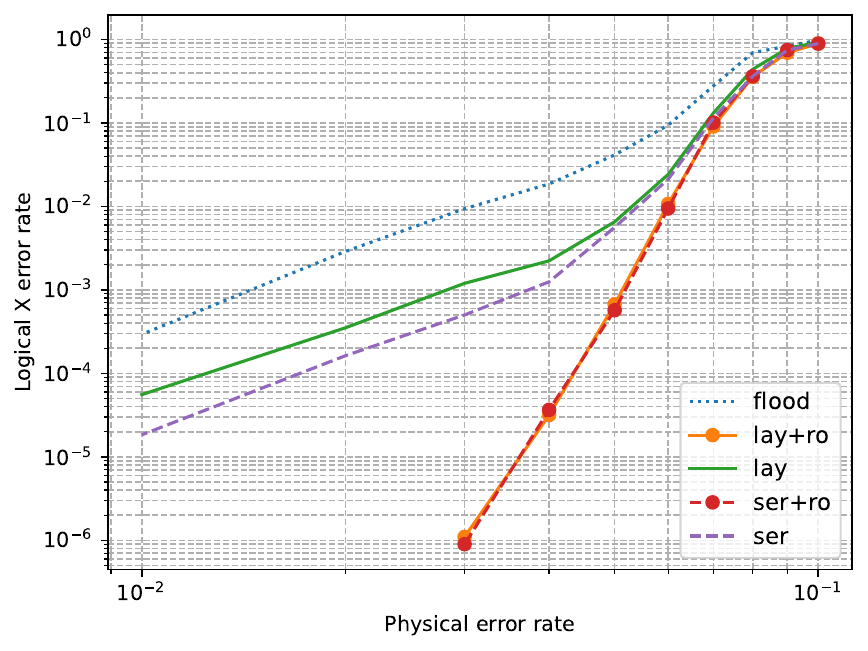}}\hfill%
\subfloat[{$C2[[1922,50,16]]$ NMS}]{\includegraphics[width=.5\linewidth]{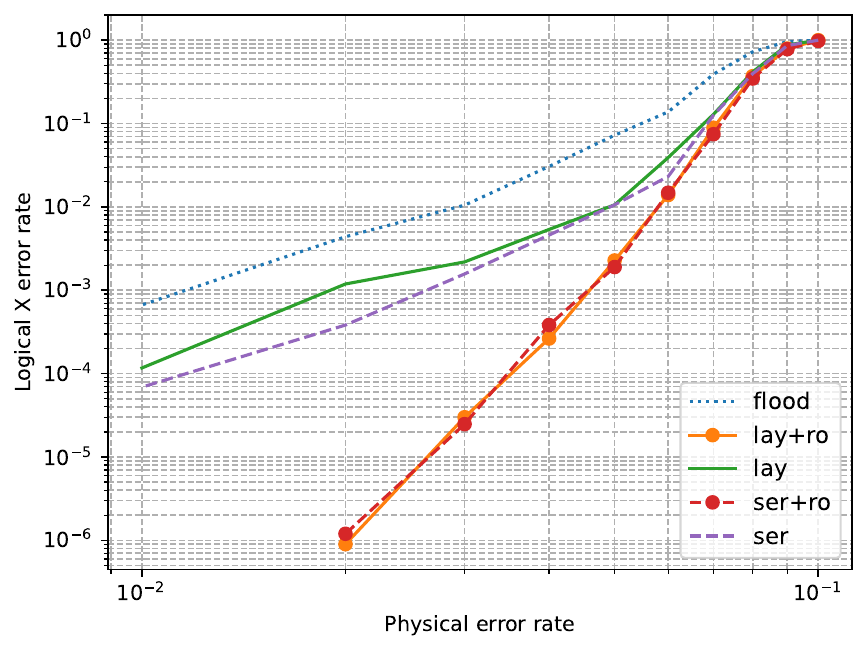}}
\caption{Comparison of different decoders and scheduling on B1 and C2 codes under Z-noise. In the simulations, we use a perturbated NMS, where each check node message is multiplied by a normalization factor uniformly chosen at random in \{0.875,0.9275\} at each iteration.
This perturbation is important to avoid an error-floor degradation.
}
\label{fig:ro}
\end{figure*} 

The B1 code is a Generalized Hypergraph Product code (construction given in \cite{panteleev2021degenerate}, Appendix) :
As such it shares similarities with the Hypergraph Product Codes. Although we do not have a generic decomposition for this family of codes, some ideas from the hypergraph product theorem apply when creating a cover for the B1 code. 
The B1 code accepts a 2-cover in 7 layers, hence giving a fractional layer number of 3.5. The layers are as follows :
$$ L_i = \{ i+7j  \quad \forall j\} \cup \{3+i+7j  \quad \forall j\}$$ 

This 2-cover comes from a decomposition of the quasi-cyclic matrix defined by polynomial $p(x) = 1 + x + x^6$ in 7 layers $S_0\dots S_6$ such that $S_i = \{i+7j \quad \forall j\}$ which is an obvious decomposition (albeit not minimal) given the generating polynomial. Those layers have the property that the union of any two layers $S_i,S_{i+3 \mod 7}$ is still a valid layer which gives the basis for the layers of the code B1. In fact, those layers are extended to the matrices of $H_X,H_Z$ much in the fashion of what is done with the hypergraph product but with a ``twist" as the blocks are quasi-cyclic shifts of identities instead of all identities so one has to be more careful and cannot use the generic formula.



\section{Numerical Results}
Fig. \ref{fig:ro} compares the different decoding techniques proposed on an HGP and Hyperbicycle Codes under Z-noise. We consider Sum-Product (SP) and Normalized Min-Sum (NMS) decoders, with serial, layered and flooded scheduling~\cite{savin2014ldpc}.
When decoding classical codes, using serial scheduling yields a factor two improvement in convergence speed over flooded scheduling. This is not the case in our numerics on quantum codes, as the serial scheduling suffers from a high error floor.
This error floor can be virtually eliminated by using a random ordering scheduling. For the flooded scheduling, the number of iterations used is $i=128$, for serial $i=64$. For the layered scheduling of a $(t,k,\_)$-cover, the layer iteration number $i_{lay} = \lfloor 64\times  \frac{k}{t} \rfloor $. Although we argued before that checking the syndrome after each layer is essentially costless, to do a ``fairer" comparison with serial scheduling under random order scheduling we also tried checking the syndrome only after $\lceil \frac{k}{t}\rceil$ layer iterations. We did not include the numerics as the two curves match almost perfectly, making it a non-issue.

On the B1 code, because it is a $t$-cover and not a layer decomposition, we alter the random ordering scheduling a little bit to boost the performances by requiring that the permutation is not chosen uniformly at random, and must satisfy the additional constraint that two successive layers should not share any check. These additional constraints help the decoder to converge faster as processing the same check twice in a row in different layers would not change its soft information.

\section{Conclusion}
We showed how to implement a layered scheduling for qLDPC codes to meet with the hardware latency limitations. In our numerics, this decoder was more efficient than what could be achieved using similar resources in flooded scheduling which might make it the go to hardware option in the future. We also show that the random order scheduling is a result interesting on its own, as it can be applied to both serial and layered scheduling to alleviate the high error floor of some codes without the need of a post-processing. It should be noted that presently, the best decoders for those codes use some kind of post-processing after message passing, something that was not studied in this paper, as none of the known post-processings can meet the hardware latency considerations. Knowing that our serial scheduling with random ordering already achieves the performances of the Ordered Statistic Decoding (OSD) post-processing on those codes\footnote{See \cite{du2022stabilizer}[Sec 4, Fig. 2], error probability should be multiplied by 2/3 to compare the two since the error model is depolarizing noise there.}.
Finding such hardware friendly post-processing to use with our layered scheduling would be another step in the direction we are aiming for.  

\section*{Acknowledgement}
This work was supported by the QuantERA grant EQUIP (French ANR-22-QUA2-0005-01, and Spain MCIN/AEI/10.13039/501100011033, grant PCI2022-132922),
and the Plan France 2030 (ANR-22-PETQ-0006) and by the European Union “NextGenerationEU/PRTR”.


\bibliographystyle{IEEEtran}
\bibliography{biblio_database}

\end{document}